\newtheorem{proposition}{Proposition}
\theoremstyle{remark}
\newtheorem{remark}{Remark}
\DeclareMathAlphabet{\Obs}{U}{}{}{}
\SetMathAlphabet\Obs{normal}{U}{eur}{m}{n}
\newcommand\given{\nonscript\:\delimsize\vert\nonscript\:\mathopen{}} 
\newcommand\SetSymbol[1][]{\nonscript\:#1\vert\nonscript\:\mathopen{}\allowbreak}
\DeclarePairedDelimiterX\Set[1]\{\}{\renewcommand\given{\SetSymbol[\delimsize]}#1}
\DeclarePairedDelimiterX\Prb[1](){\renewcommand\given{\SetSymbol[\delimsize]}#1}
\DeclarePairedDelimiterXPP{\cro}[1]{}{\langle}{\rangle}{}{#1}
\DeclarePairedDelimiter{\abs}{|}{|}
\newcommand{\RR}{\mathbb{R}}
\newcommand{\PP}{\mathbb{P}}
\newcommand{\cN}{\mathcal{N}}
\newcommand{\ind}[1]{\mathbf{1}_{#1}}
\newcommand{\vd}{\,\mathrm{d}}
\DeclareMathOperator{\argmin}{arg\,min}
\DeclareMathOperator{\LogLik}{Log-Lik}
\newcommand{\eqdef}{\mathbin{:=}}
\DeclareSIUnit\year{yr}
\begin{document}

\newgeometry{margin=2cm}

\title{
    A threshold model for local volatility:\\ evidence of leverage and mean-reversion \\ effects on historical data
}
    
\author{Antoine Lejay%
    \footnote{%
    Université de Lorraine, IECL, UMR 7502, Vand\oe uvre-lès-Nancy, F-54600, France\newline
    CNRS, IECL, UMR 7502, Vand\oe uvre-lès-Nancy, F-54600, France\newline
    Inria, Villers-lès-Nancy, F-54600, France\newline
    E-mail: \texttt{Antoine.Lejay@univ-lorraine.fr}}
    \ 
and Paolo Pigato%
    \footnote{%
    Weierstrass Institute for Applied Analysis and Stochastics, Mohrenstrasse 39, Berlin, 10117, Germany\newline
E-mail: \texttt{paolo.pigato@wias-berlin.de}
    }
}

\date{February 15, 2019}

\maketitle

\begin{abstract}
In financial markets, low prices are generally
associated with high volatilities and vice-versa,
this well known stylized fact  usually being referred to as
the leverage effect.
We propose a local volatility model, given by a stochastic
differential equation with piecewise constant coefficients, which accounts for
leverage and mean-reversion effects in the dynamics of
the prices. This model exhibits a regime switch in the dynamics accordingly to
a certain threshold. It can be seen as a continuous-time version of the self-exciting
threshold autoregressive (SETAR) model. We propose an estimation procedure for
the volatility and drift coefficients as well as for the threshold level.
Parameters estimated on the daily prices of 351 stocks of NYSE and S\&P 500, 
on different time windows, show consistent empirical evidence for leverage effects.
Mean-reversion effects are also detected, most markedly in crisis periods. 
\end{abstract}

\bigskip

\begin{quote}
\noindent\textbf{Keywords.} Leverage effect; realized volatility; mean-reversion; regime-switch; parametric estimation; threshold diffusion; stock price model
\end{quote}

\restoregeometry

\section{Introduction}

Despite the predominance of the Black-Scholes model for the dynamics of
asset prices, its deficiencies to reflect all the phenomena observed in the
markets are well documented and subject to many studies.  Some \emph{stylized
facts} not consistent with the Black-Scholes model are  non-normality of
log-returns, asymmetry, heavy tails, varying conditional volatilities and 
volatility clustering \parencite{Cont:2001gv}.  Regime switching is also
consistently observed \parencite{ang,salhi}.  Besides, some assets and indices
exhibit mean-reverting effects 
\parencite[see \textit{e.g.}][]{meng13a,Monoyios:2002dk,Lo:1988hz,Poterba,SPIERDIJK2012228}.  

By considering only the asset's price at discrete times $\Set{k\Delta
t}_{k=0,1,2,\dotsc}$, the log-returns  $r_t=\log(S_{t+1}/S_t)$ of the Black-Scholes model~$\Set{S_t}_{t\geq 0}$ are nothing more than the simple time
series 
\begin{equation}
    \label{eq:discretebs:1}
    r_{t+1}=\left(\mu-\frac{\sigma^2}{2}\right)\Delta t
    +\sigma\sqrt{\Delta t}\epsilon_t\text{ with }\epsilon_t\sim\cN(0,1), \text{ independent}.
\end{equation}

Several models alternative to \eqref{eq:discretebs:1} have been proposed to take
some of these stylized facts into account. 
Among the most popular ones, ARCH and GARCH models and their numerous variants
reproduce volatility clustering effects~\parencite{engle}. 

In this article, we focus on \emph{leverage effects}, a term which refers to a
negative  correlation between the prices and the volatility.  As observed for a
long time, the lower the price, the higher the volatility.  First explanations
were given in~\citet{black1976,Christie:1982vf}.  Processes such as the
\emph{constant elasticity volatility}~(CEV) were proposed to account of these
phenomena~\parencite{Christie:1982vf}.  
One common economic
explanation of leverage effects is that when an asset price decreases, the ratio of the company's debt with
respect to the equity value becomes larger, and as a consequence volatility
increases; another explanation is that investors tend to become more nervous
after a large negative return than after a large positive return; anyway, the
origin of leverage effects is still subject to discussion
\parencite[see \textit{e.g.}][]{Hens:2006tv}. 

In the early 1980s, H.~Tong has proposed a broad class of time series, 
the \emph{threshold autoregressive models} (TAR), 
with non-linear effects reproducing cyclical data
\parencite{tong1983,Tong:2011ud,tong2015}. 
This class, which contains \emph{hidden Markov chains} (HMM) 
as well as \emph{self-exciting threshold autoregressive} models
(SETAR), produces a wide range of behaviors. 
HMM models rely on a temporal segmentation (they are good for crisis
detection), while SETAR models rely on a spatial segmentation, with a regime change when
the price goes below or above a threshold. 

Time series of SETAR type capture leverage and mean-reverting effects by
defining a threshold which separates two regimes (high/low volatility,
positive/negative trend).  Unlike models such as HMM, no external nor latent
randomness is used.

In finance, various aspects of SETAR like models have been considered 
\parencite{Yadav:1994km,Chen:2011bk,meng13a,Siu2016,Rabemananjara:1993dh}.
An alternative form to SETAR models is provided by \emph{threshold stochastic volatility}
models~\parencite{xu,So:2002gn,chen_liu_so2008}, where the volatility depends non-linearly on the price 
through a threshold model. Thresholds can also depend on auxiliary variables, as in \parencite{chen_so2006}. 
Also other considerations such as
\emph{psychological barriers} \parencite{FUT:FUT21648,kolb} lead to threshold
models.

Continuous-time models could be seen as the limit of time series as the time
step goes to $0$. They have some advantages over time series, for allowing
irregularly sampling, the use of stochastic calculus tools and possibly analytic or
semi-analytic formulas for fast evaluation of option
prices and risk estimation. Continuous-time threshold models (or \emph{threshold diffusion})
have been studied in~\citet{Siu2016,su2016} for option valuation, in~\citet{meng13a} for portfolio
optimization,~etc. Self-exciting variants 
of Vasi\v cek and Cox-Ingersoll-Ross  continuous-time models
have also been proposed for interest rates~\parencite{interestrate,pai}. 
A quasi-maximum likelihood estimator for a threshold diffusion with applications to interest-rate modelling is studied in \citet{su2015,su2017}.
 In~\citet{brockwell97a}, a continuous-time equivalent of an
integrated SETAR model is constructed and applied to financial data.
\citet{mota14a,esquivel14a} propose two
continuous-time models which mimic SETAR time series.  
In \citet{mota14a}, one of these models, referred to as the \emph{delay threshold regime
switching model} (DTRS), is tested on the daily prices of twenty-one companies over
almost five years. For almost all the stocks, they find a regime-change for the
volatility. 

\paragraph{Contribution of the paper. } We present the \emph{geometric
oscillating Brownian motion} (GOBM), a threshold local volatility model with
piecewise constant volatility and drift, as in \citet{Gairat:2014th}. This
model is an instance of the \emph{tiled volatility model}
of~\citet{Lipton:2011um}.  We stress that the GOBM is the solution of a
one-dimensional Stochastic Differential Equation (SDE). Therefore, it is
simpler to manipulate than the DTRS of \citet{mota14a}, although having similar
features.  For the same reason, the market is complete under the GOBM. The GOBM
can also be simulated by a standard Euler scheme \parencite{yan2002,chan1998}.
Option valuation can be performed as well using semi-analytic approaches
\parencite{Lipton:2011um,lipton:2018,Gairat:2014th,pigato,decamps:04a}, and the related
problem of estimating ex-ante volatilities from the call prices can be solved
using Sturm-Liouville theory \parencite{Lipton:2011um}. 

In the GOBM model, a fixed threshold separates two regimes for the prices. Both
the volatility and the drift parameter can assume two possible values,
according to the position of the stock price, above or below the threshold.
Let us write $\sigma_-$ for the volatility below the threshold, $\sigma_+$ for
the volatility above the threshold, and similarly $b_-$ and $b_+$ for the
drift. Such model accounts of the leverage effect, when $\sigma_->\sigma_+$. In
this case, when prices are low, volatility increases, consistently with what is
observed on empirical financial data. As in \citet{mota14a}, the dynamics has
two regimes, one corresponding to the bull market, with prices above the
threshold and low volatility, and one corresponding to the bear market, with
prices below the threshold and high volatility. In this sense, the model
displays an ``endogenous'' regime switch.  A motivation for considering such
price dynamics coming from a different viewpoint is given in
\citet{ankirchner}, where it is shown that the GOBM describes the price dynamics
corresponding to the optimal strategy for a manager who can control, in a
stylized setting, the volatility of the value of a firm, getting bonus payments
when the value process performs better than a reference index.

After describing and motivating the model, we consider the
estimation of volatilities, drifts and thresholds from discrete observations of
historical stock prices.  The estimation procedures used in
\citet{mota14a,esquivel14a,brockwell97a} are all derived from the ones designed
for SETAR time series.  Here, we approach the problem directly, proposing an
estimation procedure based on stochastic calculus.  The estimator of the
volatility coefficients is inspired by the integrated volatility/realized
variance estimator; for its theoretical analysis we refer the reader to
\citet{lejay_pigato}.  Our estimator can be implemented
straightforwardly, differently from the MLE, which is very hard to implement as
there is no simple closed form for the density of the GOBM. On the other hand,
the estimator of the drift coefficient is the maximum likelihood (MLE) one. 
Its implementation is also straightforward.
Its asymptotic behavior is studied in \citet{lejay_pigato2}.

In the present paper, we discuss several issues regarding the quality of the
estimation and propose a method for estimating the threshold, based on the
Akaike information principle. In
addition, we provide a hypothesis test to decide whether or not the volatility
is constant. We test the performance of such methods via numerical experiments
on simulated data. These tests are conclusive.

Finally, we look at empirical financial data. We first benchmark our model
against the same dataset as~\citet{mota14a}: twenty-one stock prices from the NYSE, on
the time window 2005-2009. We find similar results: in particular, we
consistently find leverage effects ($\sigma_-> \sigma_+$) and mean-reverting
behavior ($b_->0$ and $b_+<0$). Then, we apply our estimators to the empirical time
series of the S\&P 500, on the three separate five years windows $2003-2007$,
$2008-2012$ and $2013-2017$, finding again consistent evidence of leverage
effects. More specifically, we may say, based on the hypothesis test mentioned
above and on the estimated ratios $\frac{\sigma_-}{\sigma_+}$, that the
leverage effect is particularly marked in the period $2008-2012$, most likely
because it contains the $2008$ financial crisis. The mean-reverting behavior is
also quite clearly detectable in the  $2008-2012$ period, less so in the
periods $2003 - 2007$ and $2013-2017$, on which $b_-$ is always positive but $b_+$
does not display a predominant sign. This seems to be in agreement with the
finding that ``\dots the speed at which stocks revert to their fundamental
value is higher in periods of high economic uncertainty, caused by major
economic and political events'', as was shown in \citet{SPIERDIJK2012228}. We
refer to the same paper and to Section~\ref{sec:SP500} for the economic
interpretation of this finding.

As final consideration, we remark that the GOBM, despite its extreme simplicity
and limited number of parameters, reproduces notable stylized facts of
financial markets such as leverage effects and mean-reverting properties.
Moreover, the application of the estimators described above to empirical data
confirm the presence of such features in the dynamics of financial indices.

\bigskip

\noindent\textbf{Outline.} The GOBM is presented in Section~\ref{sec:gobm}.
In Section \ref{sec:estimate} we consider the estimation procedures for the volatility (Section~\ref{section:vol}), the drift (Section~\ref{section:drift}) 
and the threshold (Section~\ref{section:threshold}).
In Section~\ref{sec:empirical}, we benchmark the GOBM model against 
the DTRS model (in Section~\ref{sec:dtrs}) introduced by \citet{mota14a}
by comparing the estimators on the same data sets
(in Section~\ref{sec:estimation_gobm}).
In Section~\ref{sec:leverage}, we present a hypothesis test to 
decide whether a leverage effect is present or not.
Finally, in Section \ref{sec:SP500} we apply our estimators to the stock prices of the S\&P 500, 
in three consecutive periods of five years, the second one containing
the 2008 financial crisis. 
The article ends with a global conclusion in Section~\ref{sec:conclusion}. 


\section{A threshold model for local volatility}

\label{sec:gobm}

\paragraph*{The model.}
The geometric oscillating Brownian motion (GOBM) is the solution to the local volatility model 
\begin{equation}
    \label{eqnprice}
S_t=x+\int_0^t \sigma(S_s)S_s\vd B_s+\int_0^t \mu(S_s)S_s\vd s,
\end{equation}
where $B$ is a Brownian motion and for a threshold $m\in\RR$, 
\begin{equation}
    \label{parametersGOBM}
    \sigma(x)=\begin{cases}
	\sigma_+&\text{ if }x\geq m,\\
	\sigma_-&\text{ if }x<m
    \end{cases}
    \text{ and }
    \mu(x)=\begin{cases}
	\mu_+&\text{ if }x\geq m,\\
	\mu_-&\text{ if }x<m.
    \end{cases}
\end{equation}

We use a solution $S$ to \eqref{eqnprice} as a model 
for the price of an asset. The \emph{log-price} $X=\log(S)$ satisfies the SDE
\begin{equation}
    \label{eq:logprice}
    X_t=x+\int_0^t \sigma(X_s)\vd B_s+\int_0^t b(X_s)\vd s,
\end{equation}
with
\begin{equation}
    \label{parametersOBM}
    \sigma(x)=\begin{cases}
	\sigma_+&\text{ if }x\geq r,\\
	\sigma_-&\text{ if }x<r
    \end{cases}
    \text{ and }
    b(x)=\begin{cases}
	b_+=\mu_+-\sigma_+^2/2&\text{ if }x\geq r,\\
	b_-=\mu_- -\sigma_-^2/2&\text{ if }x<r
    \end{cases}
\end{equation}
for a threshold $r=\log(m)$. 
Notice the slight abuse of notation in \eqref{parametersGOBM} and \eqref{parametersOBM}, due to the change of the value for the threshold when taking the logarithm. 

When the drift $b=0$ and $r=0$, $X$ is called an \emph{oscillating Brownian motion} (OBM, \citet{keilson}), a name we keep even in presence of a two-valued drift and a threshold $r \neq 0$. 
When $\sigma_+=\sigma_-$ and $b_+=b_-$, the price follows the Black-Scholes model. 
By extension, we still call the solution to \eqref{eq:logprice} a GOBM.

The effect of the drift is discussed in Section~\ref{section:drift}. 
When $b_+<0$ and $b_->0$, the process is ergodic and mean-reverting. The convergence towards equilibrium 
differs from the ones in the Va\v si\v ceck and Heston models in which the drift is linear.

\paragraph*{Existence and uniqueness.} 
The solution to \eqref{eq:logprice} is an instance of a more general 
class of processes with discontinuous coefficients which was studied
in~\citet{legall}. In particular,
there exists a unique strong solution to \eqref{eq:logprice}, 
hence to \eqref{eqnprice}.

The (geometric)-OBM can be easily
manipulated with the standard tool of stochastic analysis, sometimes relying
on the Itô-Tanaka formula instead of the sole Itô formula \parencite[see \textit{e.g.,}][]{etore}.

\paragraph*{Properties of the market.} 
Unlike in some regime switching models, there is no hidden randomness leading to incomplete markets,
while offering some regime change properties. 

\begin{proposition} Assuming the GOBM model for the returns process with a constant risk-free rate, 
    the market is viable and complete.
\end{proposition}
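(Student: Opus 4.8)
The plan is to run the classical Harrison--Pliska programme: produce an equivalent martingale measure, which gives viability, and then show that every contingent claim is replicable, which gives completeness (equivalently, uniqueness of the martingale measure). The only thing to watch is that the discontinuity of $\sigma$ and $\mu$ at the threshold does not obstruct any step; the inputs that make this work are that $\sigma$ and $\mu$ are bounded, that $\sigma$ is bounded away from $0$ (we take $\sigma_\pm>0$), and that \eqref{eqnprice} has a unique strong solution, as recalled above via \citet{legall}.

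First I would fix a finite horizon $T$ and a constant risk-free rate $\rho$, and take as market filtration $\cF=(\cF_t)_{t\le T}$ the usual augmentation of the Brownian filtration of $B$, so that the strong solution $S$ of \eqref{eqnprice} is $\cF$-adapted and $S>0$. With the discounted price $\tilde S_t\eqdef e^{-\rho t}S_t$, It\^o's formula gives $\vd\tilde S_t=\tilde S_t\bigl(\sigma(S_t)\,\vd B_t+(\mu(S_t)-\rho)\,\vd t\bigr)$. Since $\sigma$ takes the values $\sigma_\pm>0$ and $\mu$ the values $\mu_\pm$, the market price of risk $\theta(x)\eqdef(\mu(x)-\rho)/\sigma(x)$ is bounded, so the exponential $Z_t=\exp\bigl(-\int_0^t\theta(S_s)\,\vd B_s-\tfrac12\int_0^t\theta(S_s)^2\,\vd s\bigr)$ is a true $\PR$-martingale by Novikov's criterion, and $\vd\QR/\vd\PR|_{\cF_T}\eqdef Z_T$ defines a probability $\QR$ equivalent to $\PR$. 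By Girsanov's theorem, $W_t\eqdef B_t+\int_0^t\theta(S_s)\,\vd s$ is a $\QR$-Brownian motion and $\vd\tilde S_t=\tilde S_t\,\sigma(S_t)\,\vd W_t$; boundedness of $\sigma$ and Novikov's criterion again show that $\tilde S$ is a genuine $\QR$-martingale, not merely a local one. Hence $\QR$ is an equivalent martingale measure, which yields viability.

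For completeness I would argue directly that claims are replicable. Under $\QR$ the price solves $\vd S_t=\sigma(S_t)S_t\,\vd W_t+\rho S_t\,\vd t$, an equation again covered by \citet{legall}, so $S$ --- and hence $B_t=W_t-\int_0^t\theta(S_s)\,\vd s$ --- is adapted to the Brownian filtration of $W$, which gives $\cF^W_t=\cF_t$. The Brownian martingale representation theorem then applies to $W$ under $\QR$: a claim $H\in\cF_T$ with $\EE_\QR[|H|]<\infty$ yields the $\QR$-martingale $M_t=\EE_\QR[e^{-\rho T}H\mid\cF_t]$, which has a representation $M_t=M_0+\int_0^t\phi_s\,\vd W_s$. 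As $\sigma(S_t)\tilde S_t>0$ almost surely, I may substitute $\vd W_t=(\sigma(S_t)\tilde S_t)^{-1}\vd\tilde S_t$ and obtain $M_t=M_0+\int_0^t\bigl(\phi_s/(\sigma(S_s)\tilde S_s)\bigr)\,\vd\tilde S_s$; holding $\phi_t/(\sigma(S_t)\tilde S_t)$ discounted shares with the remainder invested in the riskless asset is a self-financing strategy replicating $H$ at time $T$. So the market is complete and $\QR$ is the unique equivalent martingale measure.

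Overall the proof mirrors the Black--Scholes case. The points I expect to require genuine care are exactly those flagged at the start: that the jumps of $\sigma$ and $\mu$ play no role in Novikov's criterion or Girsanov's theorem, which need only integrability, and that the change of drift preserves the filtration, $\cF^W_t=\cF^B_t$. The latter is the subtlest step, and it is handled by re-invoking the strong existence and uniqueness of \citet{legall} for the $\QR$-dynamics of $S$; beyond that I do not anticipate a hard obstacle.
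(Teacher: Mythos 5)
Your argument is correct, and the viability half coincides with the paper's: both obtain an equivalent martingale measure by a Girsanov change of drift, which is legitimate here because the market price of risk $(\mu(x)-\rho)/\sigma(x)$ is bounded (the coefficients take only the two values $\mu_\pm$, $\sigma_\pm$ with $\sigma_\pm>0$). Where you genuinely diverge is on completeness. The paper argues at the level of measures: citing \citet{legall}, it asserts that any measure absolutely continuous with respect to $\PR$ under which the discounted price is a martingale must arise from a Girsanov transform, so the equivalent martingale measure is unique, and completeness follows from the second fundamental theorem. You instead prove replicability directly: you establish the filtration identity $\cF^W=\cF^B$ by invoking strong existence and pathwise uniqueness for the $\QR$-dynamics, then apply the Brownian predictable representation property and divide by $\sigma(S_t)\tilde S_t>0$ to read off the hedging strategy. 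Your route is longer but more self-contained and makes explicit the one point the paper leaves implicit, namely why the discontinuity of the coefficients does not destroy the representation property (it is exactly the strong uniqueness of \citet{legall} that saves the filtration equality); the paper's route is shorter but leans entirely on the cited uniqueness-of-Girsanov-transform result. Both are complete proofs of the same statement.
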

\begin{proof} Using the results of \citet{legall}, the Girsanov theorem can be applied to the equation for
    the log-price. Hence, as for the Black-Scholes model, it is possible
    to reduce the discounted log-price to a martingale by removing the drift. Hence, there 
    exists an equivalent martingale measure, meaning that the market is viable
    \parencite[Theorem~2.1.5.4, p.~89]{jeanblanc}.

    As any absolutely continuous measure could only be reached through 
    a Girsanov transform \parencite{legall}, the risk neutral measure is unique, 
    meaning that the market is complete.
\end{proof}
\begin{remark} 
    \label{rem:sbm}
    The affine transform $\Phi(x)=x/\sqrt{\sigma_-}\ind{x<m}+x/\sqrt{\sigma_+}\ind{x>m}$
    transforms the log-price $X_t$ into the solution $Y$ to the SDE with local 
    time 
    \begin{equation}
	\label{eq:sbm}
	Y_t=\Phi(X_0)+B_t+\int_0^t \frac{b_+}{\sigma_+}\ind{x>0}(Y_s)\vd s
	+\int_0^t \frac{b_-}{\sigma_-}\ind{x<0}(Y_s)\vd s
	+\kappa L_t^0(Y),
    \end{equation}
    where $(L_t^0(Y))_{t\geq 0}$ is the local time of $Y$ at position $0$
    and $\kappa=(\sqrt{\sigma_-}-\sqrt{\sigma_+})/(\sqrt{\sigma_-}+\sqrt{\sigma_+})$.
    Eq.~\eqref{eq:sbm} is a drifted skew Brownian motion (SBM). 
    The local time part cannot be removed by a Girsanov transform 
    \parencite{legall}.  For the SBM, arbitrage may exist as shown in \citet{rossello}.
    The GOBM may be generalized by considering
    a log-price solution to $\vd X_t=x+\sigma(X_t)\vd B_t
    +b(X_t)\vd t+\eta \vd L_t^r(X)$, where $L^r(X)$ is the local time of $X$
    at the threshold $r$. The effect of the coefficient $\eta\in(-1,1)$
    would be to ``push upward'' (if $\eta>0$) or downward (if $\eta<1$)
    the price, which corresponds to some directional predictability effect \parencite{alvarez}. 
    However, considering $\eta\not=0$ radically changes the structure of the market
    with respect to classical SDEs.
\end{remark} 

\paragraph*{Monte Carlo simulation.}
The GOBM at times $kT/n$, $k=0,1,2,\dotsc,n$ is easily simulated
by $\overline{S}_{kT/n}=\exp(\overline{X}_{kT/n})$ through 
the recursive equation \parencite{chan1998,yan2002}
\begin{equation}
    \overline{X}_{\frac{(k+1)T}{n}}
    =\overline{X}_{\frac{kT}{n}}+\sqrt{\frac{T}{n}}\sigma\left(\overline{X}_{\frac{kT}{n}}\right)\eta_k
    +b\left(\overline{X}_{\frac{kT}{n}}\right)\frac{T}{n},\ 
    \eta_k\sim\cN(0,1)\text{ independent}. 
\end{equation}

\section{Estimation of the parameters from the observations of the stock prices}

\label{sec:estimate}
The GOBM $X$ is defined by five parameters (volatility, drift and threshold, 
see Table~\ref{table:notation})
which we are willing to estimate. 
In Sections~\ref{section:vol} and
\ref{section:drift} we consider the estimation of $(\sigma_\pm,b_\pm)$ for
fixed threshold $r$, by considering the estimation of the ex-post volatility
and of the drift. Afterwards, in Section~\ref{section:threshold}, 
the threshold is chosen through a model selection principle.

The procedure presented here is simple to implement and provides good results
in practice. We stress that the estimators of $\sigma_\pm$ can be implemented
with no previous knowledge of the drift, and viceversa the estimators of
$b_\pm$ do not need the knowledge of $\sigma_\pm$ to be implemented.  The
estimators of $\sigma_\pm$ are integrated volatility type estimators, simple to
implement and widely studied in the framework of SDEs. One could also use a MLE
for the volatility, based on discrete observations, but this would require
explicit expressions for the transition density, which are very involved for
the GOBM. Concerning the estimator for the drift, our MLE (cf.~\eqref{estb})
can be implemented  using estimators for local and occupation times proposed in
\citet{lejay_pigato2}, which do not involve the volatility parameter.

\begin{table}[ht]
    \begin{center}
	\begin{tabular}{l l }
	    \toprule
	    $S$ & price of the stock  \\
	    $X=\log(S)$ & log-price\\
	    $\xi=X-r$ & shifted log-price for a threshold $r$ \\
	    \midrule
	    $r$ & threshold of $X$, the log-price \\
	    $m=\exp(r)$ & threshold of $S$, the price \\
	    $\sigma_-$ & volatility of $X$ below $r$  \\
	    $\sigma_+$ & volatility of $X$ above $r$  \\
	    $b_-$ & drift of $X$ below $r$  \\
	    $b_+$ & drift of $X$ above $r$  \\
	    $\mu_-=b_-+\frac{\sigma_-^2}{2}$ & appreciation rate of $S$ below $m$  \\
	    $\mu_+=b_++\frac{\sigma_+^2}{2}$ & appreciation rate of $S$ above $m$  \\
	    \midrule
	    $d$ & delay (DTRS only)\\
	    \bottomrule
	\end{tabular}
	\caption{\label{table:notation} Notations for the GOBM and DTRS models.}
    \end{center}
\end{table}

\subsection{Estimation of the ex-post volatility}
\label{section:vol}

In this section we consider the estimation of the ex-post volatility for prices given
by the model in \eqref{eqnprice}, when the threshold $r=\log(m)$ is known. We recall the
estimators and the theoretical convergence results presented in
\citet{lejay_pigato}, and discuss their application in the framework of
volatility modeling. 
We define $\xi\eqdef X-r=\log(S)-r$ which is a OBM with a threshold at level $0$.

\paragraph*{The data.}
Our observations are $n+1$ daily data $\Set{\xi_{k}}_{k=0,\dotsc,n}$
with $\xi_k=\log(S_k)-r$ for an \textit{a priori} known threshold $r$.

Our aim is to estimate $(\sigma_+,\sigma_-)$ from such observations.

\paragraph*{Occupation times.}
The occupation times below and above the threshold play a central role in our study. 
The positive and negative
\emph{occupation times} $Q^\pm_T$ up to time $T$ of $\xi$ 
as well as their approximations $\Obs{Q}_\pm(n)$ are
\begin{equation}
    \label{eq:drift:1}
    Q^\pm_T:=\int_0^T \mathbf{1}_{\pm\xi_s\geq 0}\vd s
    \text{ and }
    \Obs{Q}_\pm(n)\eqdef\frac{T}{n}\sum_{i=1}^{n} \mathbf{1}_{\pm\xi_{i}\geq 0}.
\end{equation}

\paragraph*{The estimators of the volatilities.}
We write $\xi^+:=\max\Set{\xi,0}$ and $\xi^-:=-\min\Set{\xi,0}$, 
the positive and negative parts of $\xi$.
Our estimators $\upsigma_\pm(n)^2$ for~$\sigma_\pm^2$ are 
\begin{equation}
    \upsigma_\pm(n)^2:=
    \frac{\displaystyle
	[\xi^\pm,\xi]_n
    }
    {\displaystyle
	\Obs{Q}_\pm(n)
    },
\end{equation}
where 
\begin{equation*}
    [\xi^\pm,\xi]_n\eqdef \sum_{i=1}^n (\xi^\pm_{i}-\xi^\pm_{i-1})(\xi_{i}-\xi_{i-1}).
\end{equation*}
These estimators are natural generalizations of the 
\emph{realized volatility estimators} \parencite{barndorff2002}.

\begin{proposition}[\cite{lejay_pigato}] 
    \label{prop:estim}
    When $b_+=b_-=0$, 
    the couple $(\upsigma_-(n)^2,\upsigma_+(n)^2)$ is a consistent estimator
    of $(\sigma_-^2,\sigma_+^2)$. Besides, there exists a pair
    of unit Gaussian random variables $(G_-,G_+)$ independent
    of the underlying Brownian motion $B$ (hence of $\xi$) such that 
\begin{equation}
    \label{conv:sigma}
    \begin{bmatrix}
	\sqrt{n}\sqrt{\dfrac{\Obs{Q}_-(n)}{n}}\left(\upsigma_-(n)^2-\sigma_-^2\right)\\
	\sqrt{n}\sqrt{\dfrac{\Obs{Q}_+(n)}{n}}\left(\upsigma_+(n)^2-\sigma_+^2\right)
\end{bmatrix}
    =
    \begin{bmatrix}
    \dfrac{[\xi^-,\xi]_n- \Obs{Q}_-(n) \sigma_-^2}{\sqrt{\Obs{Q}_-(n)}}
    \\
\dfrac{[\xi^+,\xi]_n- \Obs{Q}_+(n) \sigma_+^2}{\sqrt{\Obs{Q}_+(n)}}
\end{bmatrix}
\xrightarrow[n\to\infty]{\text{law}}
    \begin{bmatrix}
	\sqrt{2}\sigma_-^2 G_-\\
    \sqrt{2}\sigma_+^2 G_+
\end{bmatrix}.
\end{equation}
\end{proposition}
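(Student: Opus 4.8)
The plan is to establish consistency and then the central limit theorem, in both cases separating the sampling intervals over which $\xi$ stays on one side of the threshold (the \emph{bulk} increments) from those on which $\xi$ crosses the threshold (the \emph{straddling} increments), and showing that the latter do not contribute to the limit. One might hope to simplify matters by pushing $\xi$ forward to the drifted skew Brownian motion of Remark~\ref{rem:sbm}, but that transformation does not remove the local-time terms that are the source of the difficulty, so I would work with $\xi$ directly.

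\textbf{Consistency.} Since $b\equiv0$, the Itô--Tanaka formula applied to $\xi^{\pm}$ yields $\xi^{\pm}_t=\xi^{\pm}_0+\int_0^t\ind{\pm\xi_s>0}\sigma(\xi_s)\vd B_s\pm\tfrac12 L^0_t(\xi)$, with $L^0(\xi)$ the symmetric local time of $\xi$ at the threshold; since the local-time parts have finite variation, $\langle\xi^{\pm},\xi\rangle_t=\sigma_{\pm}^2\,Q^{\pm}_t$. On a bulk interval one has $(\xi^{\pm}_i-\xi^{\pm}_{i-1})(\xi_i-\xi_{i-1})=\sigma_{\pm}^2(B_{t_i}-B_{t_{i-1}})^2$, while the number of straddling intervals is governed by $L^0_T$ and is in particular of smaller order than both $n$ and $Q^{\pm}_T$. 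The OBM being (null) recurrent, $Q^{\pm}_T\to\infty$ a.s., and a law of large numbers for the squared Brownian increments — together with the fact that $\Obs{Q}_{\pm}(n)$ approximates $Q^{\pm}_T$ with a relatively negligible error — gives $[\xi^{\pm},\xi]_n/\Obs{Q}_{\pm}(n)\to\sigma_{\pm}^2$ in probability.

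\textbf{Reduction of the CLT.} By the algebraic identity already displayed in the statement it is enough to prove that $([\xi^{\pm},\xi]_n-\sigma_{\pm}^2\Obs{Q}_{\pm}(n))/\sqrt{\Obs{Q}_{\pm}(n)}\convl[n\to\infty]\sqrt2\,\sigma_{\pm}^2\,G_{\pm}$, jointly over the two signs. I would write the numerator as $\sum_{i=1}^n\chi_i^{\pm}$ with $\chi_i^{\pm}\eqdef(\xi^{\pm}_i-\xi^{\pm}_{i-1})(\xi_i-\xi_{i-1})-\sigma_{\pm}^2\tfrac{T}{n}\ind{\pm\xi_i\ge0}$, and decompose $\chi_i^{\pm}=\zeta_i^{\pm}+\rho_i^{\pm}$ with $\zeta_i^{\pm}\eqdef\chi_i^{\pm}-\EE[\chi_i^{\pm}\mid\cF_{t_{i-1}}]$ a martingale-difference array and $\rho_i^{\pm}\eqdef\EE[\chi_i^{\pm}\mid\cF_{t_{i-1}}]$ the predictable bias. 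On a bulk interval with $\pm\xi>0$, away from the threshold, $\rho_i^{\pm}\approx0$ and $\zeta_i^{\pm}\approx\sigma_{\pm}^2\bigl((B_{t_i}-B_{t_{i-1}})^2-\tfrac{T}{n}\bigr)$. The entire bias $\sum_i\rho_i^{\pm}$ is carried by the straddling intervals and equals, up to martingale remainders, a combination of the occupation-time error $Q^{\pm}_T-\Obs{Q}_{\pm}(n)$ and of local-time Riemann-sum errors of the form $\sum_i\xi_{t_{i-1}}\bigl(L^0_{t_i}(\xi)-L^0_{t_{i-1}}(\xi)\bigr)$; invoking the known approximation rates for these quantities (all $o_P(\sqrt n)$), one gets $\sum_i\rho_i^{\pm}=o_P(\sqrt{\Obs{Q}_{\pm}(n)})$. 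This near-threshold estimate is the heart of the argument, and also the reason the standard high-frequency central limit theorems for Itô semimartingales cannot be quoted directly: $t\mapsto\sigma(\xi_t)$ is not a semimartingale and oscillates in every neighbourhood of a zero of $\xi$.

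\textbf{Martingale central limit theorem and independence.} For the main term $\sum_i\zeta_i^{\pm}$ I would apply a stable central limit theorem for martingale-difference arrays. One verifies: (i) the predictable quadratic variation converges, $\sum_i\EE[(\zeta_i^{\pm})^2\mid\cF_{t_{i-1}}]\convp 2\sigma_{\pm}^4 Q^{\pm}_T$, using $\Var\bigl((B_{t_i}-B_{t_{i-1}})^2\bigr)=2(T/n)^2$ on bulk intervals and negligibility of the straddling ones; (ii) a conditional Lindeberg condition, immediate from Gaussianity of the increments (a fourth-moment bound suffices); and (iii) the orthogonality relation $\sum_i\EE[\zeta_i^{\pm}(B_{t_i}-B_{t_{i-1}})\mid\cF_{t_{i-1}}]\to0$, valid because the odd Gaussian moments vanish on bulk intervals and the straddling ones are negligible. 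Property (iii) is exactly what forces the limit to be Gaussian \emph{and} independent of $B$ (hence of $\xi$): one obtains $\sum_i\zeta_i^{\pm}\to\sqrt2\,\sigma_{\pm}^2\sqrt{Q^{\pm}_T}\,G_{\pm}$ with $G_{\pm}$ a standard Gaussian, defined on an extension of the space and independent of $\cF_\infty$. Dividing by $\sqrt{\Obs{Q}_{\pm}(n)}$ — asymptotically equivalent to $\sqrt{Q^{\pm}_T}$ — and using Slutsky's lemma together with the stability of the convergence gives the stated limit. For the joint statement one runs the martingale CLT for the $\RR^2$-valued array $(\zeta_i^-,\zeta_i^+)$: a bulk interval feeds only one coordinate, so the off-diagonal predictable covariation lives on the straddling intervals and vanishes, which shows moreover that $G_-$ and $G_+$ are independent. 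The main obstacle, as indicated, is the control of $\sum_i\rho_i^{\pm}$ in the reduction step — the careful estimation of the straddling-interval contributions, including the occupation-time and local-time errors.
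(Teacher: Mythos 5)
Your proposal is correct and follows essentially the same route as the proof in the cited companion paper \citet{lejay_pigato} (the present paper only imports the statement): an It\^o--Tanaka decomposition of $\xi^{\pm}$ giving $\langle\xi^{\pm},\xi\rangle=\sigma_{\pm}^2Q^{\pm}$, a stable central limit theorem for the martingale-difference part with the orthogonality-to-$B$ condition yielding independence of $(G_-,G_+)$ from $\xi$ and of each other, and a separate control of the threshold-straddling increments via the approximation rates for local and occupation times. You also correctly isolate the genuinely delicate step, namely showing that the straddling-interval bias is $o_P$ of the CLT scale rather than merely $O_P$ of it, which is exactly where the technical work of the reference lies.
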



\paragraph*{Dealing with a drift.} Proposition~\ref{prop:estim} is actually 
proved on high-frequency data $\xi_{k,n}:=\xi_{k/n}$, $k=0,\dotsc,n$ on the time interval $[0,1]$. 

Using a scaling argument, for any constant~$c>0$, 
$\Set{c^{-1/2}\xi_{ct}}_{t\geq 0}$ is equal in distribution
to $\zeta^{(c)}$ solution to the SDE
\begin{equation}
    \vd \zeta^{(c)}_t=\sigma(\zeta^{(c)}_t)\vd W_t+\sqrt{c} b(\zeta^{(c)}_t)\vd t,\ \zeta^{(c)}_0=c^{-1/2}\xi_0
\end{equation}
for a Brownian motion $W$.  With $c=n$, the problem of estimating 
the coefficients of 
$\Set{\xi_{k}}_{k=0,\dotsc,n}$ is the same as the \emph{high frequency} estimation 
of the coefficients of $\Set{\sqrt{n}\zeta^{(n)}_{k,n}}_{k=0,\dotsc,n}$ on
the time range $[0,1]$. 

Without drift, observing $\Set{\xi_{k,n}}_{k=0,\dotsc,n}$
or $\Set{\xi_{k}}_{k=0,\dotsc,n}$ leads to the same estimation.
Using the Girsanov theorem, Proposition~\ref{prop:estim} stated for the high-frequency regime, 
that is on the observations $\Set{\zeta^{(1)}_{k,n}}_{k=0,\dotsc,n}$ 
(since all the $\zeta^{(n)}$ are equal in distribution), is also 
valid in presence of a bounded drift.

With our data, the drift is very small compared to the ex-post volatility and the number~$n$ 
of observations is finite, so that we still apply Proposition~\ref{prop:estim}.


\subsection{Estimation of the drift coefficients}
\label{section:drift}

To estimate the values $b_\pm$ of the drift,
we consider that the threshold $r=\log m$ is known
(this issue is treated in Sect.~\ref{section:threshold}).

\paragraph*{Maximum likelihood estimation of the drift.}
Following \citet{lejay_pigato2}, we introduce as estimators of $b_\pm$ the quantities
\begin{equation}
    \label{estb}
    \beta_\pm(T)\eqdef \pm
 \frac{\xi_T^\pm-\xi_0^\pm -L_T/2 }{Q^\pm_T(\xi)}
 \text{ and }
    \upbeta_\pm(n)\eqdef \pm \frac{\xi_T^\pm-\xi_0^\pm -\Obs{L}(n)/2 }{\Obs{Q}_\pm(n)},
\end{equation}
where $L_T$ is the symmetric local time of $\xi$ at $0$ while 
$\Obs{L}(n)$ is the discrete local time defined by 
\begin{equation*}
    \Obs{L}(n)\eqdef \sum_{i=0}^{n-1} \ind{\xi_i\xi_{i+1}<0}\abs{\xi_{i+1}}. 
\end{equation*}
The expressions of $\Obs{L}(n)$ and $\upbeta_\pm(n)$ do not involve $\sigma_\pm$.
Besides, $\upbeta_\pm(n)$ converges almost surely to $\beta_\pm(T)$ as $n$ goes to infinity.

When the coefficients are constant, as for the log-price in the Black-Scholes
model, where $\vd \zeta_t=\sigma \vd B_t+ b\vd t$, a consistent estimator of the drift $b$
is $b(T)=(\zeta_T-\zeta_0)/T$.  
Our estimator~\eqref{estb} generalizes this formula; 
the local time term appears because of the discontinuity in the coefficients.

\paragraph{Asymptotic properties.}

The drift estimator shall be studied for a long time horizon.
The asymptotic properties of $\beta_\pm(T)$ as $T\to\infty$, hence of $\upbeta_\pm(n)$,
depend on the asymptotic behaviors of $Q^\pm_T$ in \eqref{eq:drift:1}. 
We summarize in Table~\ref{table:drift} the different cases that
depend solely on the respective signs of $b_+$ and $b_-$.

\begin{table}[!ht]
    \begin{center}
	\begin{tabular}{c | c c  c}
\multicolumn{1}{c}{} & $b_+<0$ & $b_+=0$ &  $b_+>0$ \\
	\toprule
	$b_->0$ & {ergodic (\textbf{E})} & null recurrent (\textbf{N1}) & transient (\textbf{T0})  \\
	$b_-=0$ & null recurrent (\textbf{N1}) & null recurrent (\textbf{N0}) & transient (\textbf{T0})\\
	$b_-<0$ & transient (\textbf{T0}) & transient (\textbf{T0}) & transient (\textbf{T1}) \\
	\bottomrule
    \end{tabular}
    \caption{\label{table:drift} Regime of $\xi$ according to the respective signs
    of $(b_-,b_+)$.}
\end{center}
\end{table}

The ergodic case, which corresponds to a mean-reverting process, is of course
the most favorable one. In the transient case, the estimators may not converge. 
We briefly recall some of the results in \citet{lejay_pigato2} (see the original paper for detailed statements).

\begin{itemize}[leftmargin=1cm]
    \item[\textbf{E}] The ergodic case is the mean-reverting one. 
	In this case ($\beta_-(T),\beta_+(T))$ converges almost surely to $(b_-,b_+)$, with speed $\sqrt{T}$.

    \item[\textbf{T0}] If $b_+>0$, $b_-\geq 0$, then $\beta_+(T)$ converges
	to~$b_+$ with speed $\sqrt{T}$.
	The estimator $\beta_-(T)$~of~$b_-$ does not converge to $b_-$
	and is then meaningless.
	The case $b_-<0$, $b_+\leq 0$ is treated by symmetry.

    \item[\textbf{T1}] If $b_+>0$ and $b_-<0$, then 
	with probability $p\eqdef\sigma_-b_+/(\sigma_+b_-+\sigma_-b_+)$,
	$\beta_+(T)$ converges to $b_+$ with speed $\sqrt{T}$,  
	while with probability $1-p$, 
	$\beta_-(T)$ converges to $b_-$ with speed $\sqrt{T}$.
	This asymptotic behavior is due to the fact that after a given 
	random time, the process does not cross the threshold anymore. 

    \item[\textbf{N0}] Whatever $T>0$ the distribution of $\sqrt{T}(\beta_-(T),\beta_+(T))$ does not depend 
	on~$T$.  Then~$\beta_\pm(T)$ are consistent estimators of $b_\pm=0$.

    \item[\textbf{N1}] If $b_+=0$, $b_->0$, then $(\beta_-(T),\beta_+(T))$ converges
	almost surely to $(b_-,b_+)$; $\beta_-(T)$ converges to $b_-$ with speed $T^{1/4}$ and
		$\beta_+(T)$ converges to $b_+$ with speed~$\sqrt{T}$.
\end{itemize}

\subsection{Estimation of the threshold}
\label{section:threshold}

The above estimators for $\sigma$ and $b$ assume that the value $m$ of the threshold is known. 
Following \citet{tong1983} \parencite[see also][p.~79]{priestley}, we estimate $m$ using a principle
of \emph{model selection} relying on the ideas of the Akaike information principle (AIC) \parencite{akaike}. 
Since the AIC involves the likelihood function, for which we do not necessarily have closed form expressions, we will need to work with approximations.

\paragraph{Approximation of the density.}

Given a threshold $r$ as well as volatility and drift functions $x\mapsto \sigma(x)$ and $x\mapsto b(x)$, 
we first consider the density $y\mapsto p(\Delta t,x,y;r,\sigma,b)$ of $X_{t+\Delta t}$ given $X_t=x$
(the process is time-homogeneous so that $p$ only depends on $\Delta t$, not on $t$).
For a vanishing drift, a closed form expression for $p$ is known \parencite{keilson}. 
In presence of a drift, the expression may become cumbersome if not intractable \parencite{lejay2015}.
However, $p$ can be approximated, in short time, via the related Green function, easier to compute \parencite[see][Chapter~2]{lenotre}.
Alternatively, we assume that the drift is constant over the time interval $[t,t+\Delta t]$
and replace $p$ by the density of $Y_{t+\Delta t}+b(x)\Delta t$ given $Y_t=x$, where $Y$
has the same volatility of $X$ yet with a vanishing drift. 
In the implementation, we use the latter approximation of $p$
which we denote by $\widetilde{p}(t,x,\cdot;r,\sigma,b)$. 

\paragraph{Selection of the threshold.}
The procedure to select the ``best'' threshold is then 
\begin{enumerate}[nosep,label=\arabic*/]
    \item We fix $r^{(1)},\dotsc,r^{(k)}$ possible thresholds in the range of the observed values $\Set{X_{t_i}}_{i=0,\dotsc,T}$ of the log-price $X$.
    \item For each threshold $r^{(j)}$, we estimate the drift and volatilities $\widehat{\sigma}^{(j)}$ and $\widehat{b}^{(j)}$.
    \item We compute the approximate log-likelihood 
	\begin{equation}
	    \label{eq:loglik}
	    \LogLik(j)=\sum_{i=0}^{T-1}\log \widetilde{p}(t,X_{t_i},X_{t_{i+1}};r^{(j)},\widehat{\sigma}^{(j)},\widehat{b}^{(j)}).
	\end{equation}
    \item We select as threshold $\widehat{r}$ the value $r^{(\tilde{\jmath})}$ where 
	$\tilde{\jmath}\eqdef\argmin_{j=1,\dotsc,k} \LogLik(j)$.
\end{enumerate}

\paragraph{Comparison with other models.}
In the model selection based on the AIC, the best model is the one for which
the log-likelihood corrected by a value depending on the number of parameters
is minimized. Here,  the number of parameters is fixed to $4$ so that it is sufficient to
use only approximations of the log-likelihoods.  A similar procedure is used in
\citet{meng13a}, yet with a density estimated through Monte Carlo, which is
time-consuming. On the contrary, our procedure avoids any simulation step.
With respect to the estimation for the SETAR model
\parencite{tong1983,priestley}, as well as the one of the DTRS model presented below,
based on least squares \parencite{mota14a}, there is no delay so that the dimension
of the model is reduced by $1$.


\section{Benchmarking the model}

\label{sec:empirical}

We apply now our estimators to empirical financial data. We benchmark our model
against the \emph{delay and threshold regime switching model} (DTRS)
of \citet{mota14a} by using the same data. We start this section shortly
presenting the DTRS model.

\subsection{The delay and threshold regime switching model}

\label{sec:dtrs}

\citet{mota14a} introduce the DTRS. First, they consider two 
sets of (functional) parameters $(\sigma_1,\mu_1)$ and $(\sigma_2,\mu_2)$,
as well as a diffusion solution to the stochastic differential equation 
\begin{equation}
    \vd S_t=\mu_{J_t}(t,S_t)\vd t+\sigma_{J_t}(t,S_t)\vd B_t
\end{equation}
for a Brownian motion $B$, where $J$ is a non-anticipative process
with values in the set of indices $\Set{1,2}$. 

The rule for $J$ to switch is based on a threshold $m$, a delay $d$ as well as
a small parameter $\epsilon>0$.
Assume $S_0\leq m$ and $J_0=1$. The process evolves according
to the parameters $(\sigma_1,\mu_1)$ until it reaches the level $m+\epsilon$
at a (random) time $\tau_1$. Then it evolves according 
to the parameters $(\sigma_1,\mu_1)$ up to time $\tau_1+d$ where it
switches to parameters $(\sigma_2,\mu_2)$ (that is $J_{\tau_1+d}=2$) 
until it reaches the level $m$ at time $\tau_2$. Then it switches again to the state $1$ 
after a delay $d$ ($J_{\tau_2+d}=1$)
and so on. 

The parameter $\epsilon$ prevents an accumulation of \textquote{immediate} switches
so that~$S$ can be constructed on a rigorous basis \parencite{esquivel14a}.
With respect to simulation or estimation, $\epsilon$ is of no practical importance 
as~$S$ is only observed or simulated at discrete times. 

More specifically, the DTRS model considered in \citet{mota14a} assumes that the~$\mu_i$ and~$\sigma_i$ ($i=1,2$) are
\begin{equation}
    \begin{cases}
	\sigma_1(t,x)=\sigma_-\cdot x & \text{ if }x<m,\\
	\sigma_2(t,x)=\sigma_+\cdot x & \text{ if }x\geq m
    \end{cases}
    \text{ and }
    \begin{cases}
	\mu_1(t,x)=\mu_-\cdot x & \text{ if }x<m,\\
	\mu_2(t,x)=\mu_+\cdot x & \text{ if }x\geq m
    \end{cases}
\end{equation}
for some constants $\sigma_\pm>0$ and $\mu_\pm$. Hence, on each
regime, the price $S$ follows a dynamic of Black-Scholes type. 
We also define
$b_\pm=\mu_\pm-\sigma_\pm^2/2$ so that $b_\pm$ are the possible
values of the drift for the log-price.

Adapting the estimation approach for the SETAR \parencite{tong1983},
\citet{mota14a} propose a consistent estimation procedure of the parameters, based on least squares.

\paragraph{Results for the DTRS.}
This estimator is applied to the daily log-prices of twenty-one~stock
prices of the NYSE, from January 2005 to November 2009 (presented in Table~\ref{table:stock}). In Table~\ref{table:dtrs}, we report
the estimated values of $\sigma_\pm$, $m$, $\mu_\pm$ (or $b_\pm$) and $d$
found in~\citet{mota14a}. These values have to be compared
with the ones in Table~\ref{table:gobm}.

\begin{table}
    \begin{center}
\begin{tabular}{rl rl rl}
\toprule
AAPL &  Apple &
ADBE &  Adobe \\  
AMZN &  Amazon &
C &  CitiGroup \\
CA &  CA & 
CSCO &  Cisco \\
GOOG &  Google & 
HP &  Hewlett-Packard \\  
IBM &   IBM &
JPM &  JP Morgan \\
KO &  Coca-cola & 
MCD &  McDonalds \\ 
MON &  Monsanto &
MSFT &  Microsoft\\
MSI &  Motorola & 
NYT &   New-York Times  \\
PCG &  PG\&E & 
PFE &  Pfizer \\ 
PG &  P \& G &
PM &  Philip Morris  \\ 
SBUX &  Starbucks \\
\bottomrule
\end{tabular}
    
    \caption{\label{table:stock} Abbreviations of the names of the stocks
    (in Yahoo Finance).}
\end{center}
\end{table}

For most of the data, a leverage effect is observed: 
the ex-post volatility below the threshold is higher than above it. 
In \citet{mota14a}, option prices of European calls are also computed using a Monte Carlo procedure.  The resulting prices are in good agreement with the ones of the market.

\begin{table}[t]
    \centering\small
    \begin{tabular}{rrS[table-format=3.1]S[table-format=3.1]@{\,}S[table-format=3.1]S[table-format=+3.1]S[table-format=+3.1]S[table-format=+3.1]S[table-format=+3.1]c}
\toprule
\multicolumn{9}{c}{Delay threshold regime switching (DTRS) \parencite{mota14a}}\\
\toprule
\multicolumn{1}{c}{Index} & {$d$} & {$m$ [\$]} & {$\sigma_-$ [\%]} & {$\sigma_+$ [\%]} & {$\mu_-$ [\%]} & {$\mu_+$ [\%]} &  {$b_-$ [\%]} & {$b_+$ [\%]} & {signs} \\
\cmidrule{2-10}
AAPL & 8 & 173.5 & 54.1 & 45.6 & 57.5 & -112.4 & 42.8 & -122.7 & $+-$ \\
ADBE & 1 & 41.5 & 44.0 & 25.1 & 31.8 & -74.3 & 21.9 & -77.4 & $+-$ \\
AMZN & 1 & 77.7 & 51.4 & 45.4 & 54.2 & -462.2 & 41.1 & -472.5 & $+-$ \\
C & 2 & 43.1 & 120.6 & 16.8 & 39.8 & -3.5 & -32.8 & -5.0 & $--$ \\
CA & 1 & 22.1 & 53.0 & 24.6 & 44.4 & -23.2 & 30.2 & -26.2 & $+-$ \\
CSCO & 1 & 16.3 & 56.0 & 31.3 & 313.5 & -0.0 & 297.6 & -5.0 & $+-$ \\
GOOG & 13 & 642.0 & 37.0 & 40.2 & 40.6 & -148.7 & 33.8 & -156.7 & $+-$ \\
HP & 1 & 46.9 & 39.2 & 27.1 & 42.3 & -78.9 & 34.8 & -82.4 & $+-$ \\
IBM & 1 & 124.3 & 25.1 & 20.3 & 12.9 & -93.5 & 9.6 & -95.5 & $+-$ \\
JPM & 2 & 25.0 & 131.3 & 47.5 & 715.4 & -0.3 & 629.2 & -11.6 & $+-$ \\
KO & 1 & 10.0 & 78.6 & 29.1 & 398.2 & -5.3 & 367.4 & -9.6 & $+-$ \\
MCD & 1 & 54.6 & 22.4 & 25.7 & 38.8 & -29.5 & 36.3 & -32.8 & $+-$ \\
MON & 1 & 112.0 & 47.0 & 49.4 & 57.5 & -145.4 & 46.6 & -157.5 & $+-$ \\
MSFT & 1 & 22.9 & 54.0 & 25.1 & 81.1 & 6.0 & 66.5 & 3.0 & $++$ \\
MSI & 14 & 21.9 & 49.5 & 28.4 & 7.8 & -47.9 & -4.3 & -51.9 & $--$ \\
NYT & 4 & 32.5 & 49.5 & 17.3 & -9.8 & -89.5 & -22.2 & -91.0 & $--$ \\
PCG & 6 & 35.3 & 49.4 & 21.6 & 170.1 & -4.0 & 158.0 & -6.6 & $+-$ \\
PFE & 2 & 16.7 & 40.2 & 24.0 & 67.0 & -14.6 & 59.0 & -17.4 & $+-$ \\
PG & 1 & 61.9 & 20.3 & 20.2 & 19.2 & -28.0 & 17.1 & -30.0 & $+-$ \\
PM & 1 & 42.0 & 44.4 & 31.9 & 121.2 & -40.3 & 111.4 & -45.4 & $+-$ \\
SBUX & 15 & 33.6 & 45.4 & 26.0 & 11.6 & -39.8 & 1.3 & -43.1 & $+-$ \\
\bottomrule
\end{tabular}
\caption{\label{table:dtrs}
Estimated daily parameters in \% per year found by \citet{mota14a}
for the DTRS model on daily data from January 2005 to November 2009
with the notations given in Table~\ref{table:notation}
(In the original table, volatilities and drift are expressed in \% per day).
The last column \emph{signs} contains the respective signs of $b_-,b_+$
(a $+-$ indicates a mean-reversion effect). }
\end{table}

\paragraph*{Comparison between the DTRS model and the GOBM.}
In spirit, the GOBM is similar to DTRS of \citet{mota14a} or to the model in \citet{esquivel14a}. 
Yet, it avoids all the difficulties related
to the ``gluing'' and regime change that involves a very thin layer
which serves to avoid infinitely many immediate switches. 
\citet{Hottovy} discuss the asymptotic behavior of the process
as the width of the layer decreases to $0$.

The GOBM has five parameters while the DTRS has six parameters because it also involves a delay. 
For most of the data, the estimated delay in the DTRS is $d=1$, 
which means that the switching occurs without delay. Otherwise,
the delay means a slow decreasing auto-correlation, or a long memory effect. 
Yet, for long delay, how to discriminate a leverage effect 
from sudden changes due to external parameters such as crisis?
The presence of a delay increases the possibility of miss-specifications in the estimation procedure. 

\subsection{Estimation of the parameters of the GOBM}
\label{sec:estimation_gobm}


In Table~\ref{table:gobm}, we estimate the parameters for the GOBM 
on the same stocks as for the DTRS. 
Although we use the same source (Yahoo Finance) as \citet{mota14a}, 
it seems that KO is a different time series than in this article.

The volatilities $(\sigma_-,\sigma_+)$ are in good agreement for both
models. Moreover, $\sigma_- >\sigma_+$ for all the stocks, 
the only exceptions being MCD for both models and GOOG for the DTRS model.
In the latter situation, $\sigma_-$ is close to $\sigma_+$. 
The respective signs of~$b_-$ and~$b_+$ are consistent with the ones of \citet{mota14a}
and suggest a mean-reversion effect ($b_->0$, $b_+<0$) for most of the stock prices.
The magnitudes of $b_-$ and $b_+$ are also consistent. As the number of data is rather small 
($n=\num{1217}$) and the considered period is only five years, 
it is not reasonable to aim for a more accurate description of the drift.
Anyway, this indicates that below the threshold the ex-post volatility is higher and the drift
is upward oriented. 
The threshold estimations are in good agreement for twelve stocks out of twenty-one.

\begin{table}[t]
\centering
\begin{tabular}{cS[table-format=3.1]S[table-format=1.2]S[table-format=1.2]S[table-format=+2.2]S[table-format=+2.2]S[table-format=+2.2]S[table-format=+2.2]c}
  \toprule
Index & {$m$ [\$]} & {$\sigma_-$ [\%]} & {$\sigma_+$ [\%]} & {$\mu_-$ [\%]} & {$\mu_+$ [\%]} & {$b_-$ [\%]} & {$b_+$ [\%]} & {signs} \\ 
  \midrule
AAPL & 119.4 & 59.86 & 40.48 & 43.06 & 19.66 & 25.14 & 11.46 & $++$ \\ 
  ADBE & 46.1 & 46.79 & 81.37 & 13.93 & -136.87 & 2.98 & -169.97 & $+-$ \\ 
  AMZN & 39.7 & 38.71 & 54.57 & 36.72 & 36.59 & 29.23 & 21.70 & $++$ \\ 
  C & 40.1 & 118.58 & 17.35 & -37.30 & -7.61 & -107.60 & -9.11 & $--$ \\ 
  CA & 21.5 & 51.27 & 25.54 & 40.35 & -10.56 & 27.21 & -13.82 & $+-$ \\ 
  CSCO & 16.9 & 60.48 & 30.70 & 156.08 & 0.05 & 137.79 & -4.67 & $+-$ \\ 
  GOOG & 373.8 & 44.69 & 33.03 & 86.76 & 6.06 & 76.77 & 0.60 & $++$ \\ 
  HP & 57.6 & 66.18 & 41.09 & 53.51 & -125.14 & 31.61 & -133.58 & $+-$ \\ 
  IBM & 115.2 & 26.04 & 20.21 & 17.86 & -20.97 & 14.47 & -23.01 & $+-$ \\ 
  JPM & 32.5 & 130.32 & 41.54 & 233.41 & -2.13 & 148.48 & -10.76 & $+-$ \\ 
  KO & 47.8 & 23.70 & 17.88 & 11.75 & 5.72 & 8.94 & 4.12 & $++$ \\ 
  MCD & 51.4 & 20.23 & 28.02 & 29.42 & 3.54 & 27.37 & -0.39 & $+-$ \\ 
  MON & 85.8 & 52.71 & 55.80 & 63.67 & -99.35 & 49.78 & -114.92 & $+-$ \\ 
  MSFT & 23.2 & 51.20 & 25.90 & 74.67 & -6.49 & 61.56 & -9.84 & $+-$ \\ 
  MSI & 14.2 & 66.44 & 25.99 & -7.71 & -1.10 & -29.79 & -4.48 & $--$ \\ 
  NYT & 16.0 & 78.46 & 25.85 & -7.40 & -25.26 & -38.18 & -28.60 & $--$ \\ 
  PCG & 33.3 & 127.09 & 23.29 & 504.83 & 4.24 & 424.06 & 1.53 & $++$ \\ 
  PFE & 19.1 & 39.68 & 20.55 & 11.92 & -7.54 & 4.05 & -9.65 & $+-$ \\ 
  PG & 51.9 & 29.62 & 20.12 & 39.25 & 3.14 & 34.87 & 1.11 & $++$ \\ 
  PM & 41.9 & 45.37 & 30.81 & 141.64 & -45.60 & 131.35 & -50.35 & $+-$ \\ 
  SBUX & 13.0 & 71.88 & 46.51 & 49.37 & -13.51 & 23.54 & -24.33 & $+-$ \\ 
   \bottomrule
\end{tabular}
\caption{\label{table:gobm}%
    Estimated parameters in \% per year for the GOBM model 
on the daily data from January 2005 to November 2009
with the notations given in Table~\ref{table:notation}. 
The last column \emph{signs} contains the respective signs of $b_-,b_+$
(a $+-$ indicates a mean-reversion effect).}
\end{table}


\section{Is there some leverage effect?}
\label{sec:leverage}

Our aim is to test whether or not $\sigma_+=\sigma_-$ when $b_-=b_-=0$
(on daily data, $b_-$ and $b_+$ have small values with respect to $\sigma_-$ and $\sigma_+$).
Our Hypothesis test is then 
\begin{itemize}[nosep,partopsep=0pt,leftmargin=2cm]
    \item[$(H_0)$] (null hypothesis) $\sigma_-=\sigma_+$ ;
    \item[$(H_1)$] (alternative hypothesis) $\sigma_-\not= \sigma_+$.
\end{itemize}

\subsection{Construction of a confidence region}
The asymptotic result \eqref{conv:sigma} of Proposition~\ref{prop:estim} is rewritten as 
\begin{equation}
    \label{eq:limit}
    \begin{bmatrix}
	\upsigma_-(n)^2\\
	\upsigma_+(n)^2
    \end{bmatrix}
    \approx
    \begin{bmatrix}
	\sigma_-^2\\
	\sigma_+^2
    \end{bmatrix}
    -
    \frac{1}{\sqrt{n}}M_T\mathbf{G}
    \text{ with }M_T=\sqrt{2T}\begin{bmatrix}
    \sigma^2_-/\sqrt{Q^-_T} & 0 \\
    0 & \sigma^2_+/\sqrt{Q^+_T}
    \end{bmatrix},
\end{equation}
where $\mathbf{G}\sim\mathcal{N}(0,\mathrm{Id})$ is a Gaussian vector independent 
of the observed process~$\xi$. 
The limit term in \eqref{eq:limit} involves a double randomness,
and $M_T$ is a measurable function of $\xi$. 
We approximate $M_T$ by 
\begin{equation}
    \Obs{M}(n)
\eqdef
\sqrt{2T}\begin{bmatrix}
    \upsigma^2_-(n)/\sqrt{\Obs{Q}_-(n)} & 0 \\
    0 & \upsigma^2_+(n)/\sqrt{\Obs{Q}_+(n)}
    \end{bmatrix}.
\end{equation}
As the Gaussian vector $\mathbf{G}$ is isotropic, 
we define for a level of confidence~$\alpha$ the 
quantity $q_\alpha$ by $\PP[|\mathbf{G}|\leq q_\alpha]=1-\alpha$. 
This quantity is easily computed
since $|\mathbf{G}|^2$ follows a $\chi^2$ distribution with two degrees of freedom.
Our \emph{confidence region} of level $\alpha$ is 
the ellipsis
\begin{equation}
    \mathcal{R}_\alpha=\Set*{
	\begin{bmatrix}
	    \upsigma_-(n)^2\\
	    \upsigma_+(n)^2
	\end{bmatrix}
	+\frac{q_\alpha}{\sqrt{n}}\Obs{M}(n)
	\begin{bmatrix}\cos(\theta)\\ \sin(\theta)\end{bmatrix}
	\given \theta\in [0,2\pi)}.
\end{equation}

Our \textbf{rule of decision} is then: reject the Null Hypothesis $(H_0)$ if the diagonal line $s:[0,+\infty)\mapsto (s,s)$
does not cross $\mathcal{R}_\alpha$.

\subsection{Numerical simulations}

We perform numerical simulations
to check the reliability of the estimation of $(\sigma_-,\sigma_+,r)$
as well the hypothesis test \textquote{$\sigma_+=\sigma_-$}.
For this, we simulate $N=\num{1000}$
paths with daily data over five years  for the three sets of
parameters given in Table~\ref{table:simu:set}.
The density of the estimated values of $\sigma_-$, $\sigma_+$ and $r$
are shown in Figure~\ref{fig:simu:sm:sp}. The proportion of rejection
of the hypothesis test \textquote{$\sigma_+=\sigma_-$} are given 
in Table~\ref{table:simu:hyp}. A good agreement is then observed
for the parameters $\sigma_-$, $\sigma_+$ and the threshold. 

The estimation of $\mu_+$ and $\mu_-$, not shown here, presents a large variance,  as
expected, since when $\mu_+=\mu_-=0$ the process is only null recurrent. This
is discussed in full details in \citet{lejay_pigato2}.

\begin{table}
    \centering
\begin{tabular}{r l c c c c c}
    \toprule
    & & $\sigma_-$ & $\sigma_+$ & $\mu_-$ & $\mu_+$ & $m$ \\
    \midrule
    set 1 & $\sigma_+\ll \sigma_-$ & \SI{80}{\%\per\year} & \SI{30}{\%\per\year} & \num{0} & \num{0} & $1$\\
    set 2 & $\sigma_+\approx \sigma_-$ & \SI{50}{\%\per\year} & \SI{30}{\%\per\year} & \num{0} & \num{0} & $1$\\
    set 3 & $\sigma_+=\sigma_-$ & \SI{30}{\%\per\year} & \SI{30}{\%\per\year} & \num{0} & \num{0} & $1$\\
    \midrule
    \multicolumn{2}{c}{} & 
    \multicolumn{5}{c}{$S_0=\SI{1}{\$} $}\\
    \bottomrule
\end{tabular}
\caption{\label{table:simu:set} Set of yearly parameters used for simulations.}
\end{table}

\begin{figure}[ht!]
    \begin{center}
	\includegraphics{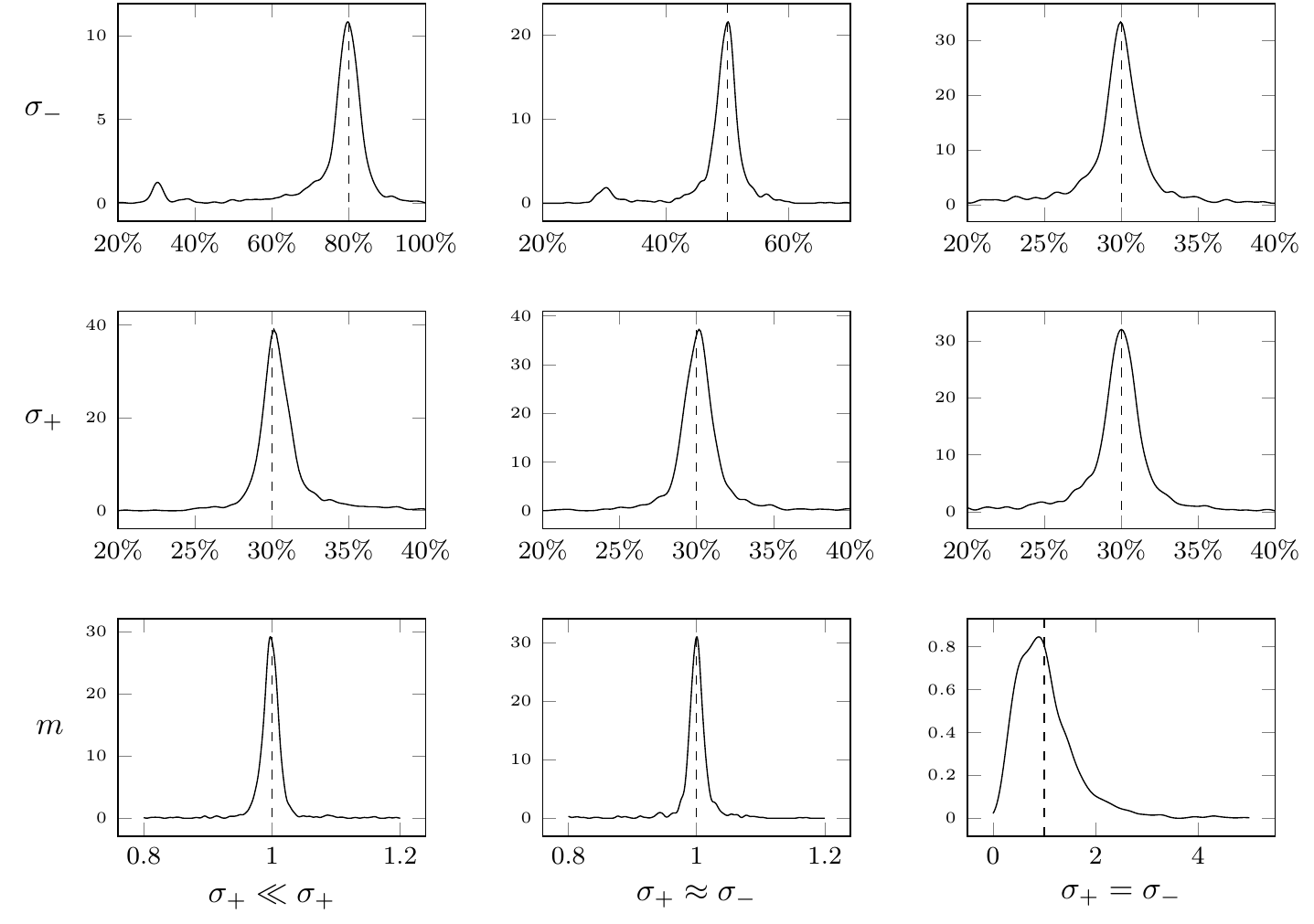}
	\caption{\label{fig:simu:sm:sp} 
	    Density of the estimated values of $\sigma_-$, $\sigma_+$
	    (yearly) and the threshold $m$ for the three sets of parameters
	    of Table~\ref{table:simu:set}, 
	    with \num{1000} simulations per set.
    }
    \end{center}
\end{figure}

\begin{table}
    \centering
    \begin{tabular}{r ccc}
	\toprule
	$\sigma_+\ll\sigma_-$ & $\sigma_+\approx\sigma_-$ & $\sigma_+=\sigma_-$\\
	\midrule
	\si{81}{\%} & \si{81}{\%} & \si{14}{\%} \\
	\bottomrule
    \end{tabular}
\caption{\label{table:simu:hyp} Proportion of rejection 
of the null hypothesis $(H_0)$ \textquote{$\sigma_+=\sigma_-$}
with a \si{95}{\%} confidence level
for the three sets of parameters of Table~\ref{table:simu:set},
with \num{1000} simulations per set.}
\end{table}

\subsection{Empirical result on the 2005 -- 2009 data}

As the drift is small, it should not affect this test. Therefore we assume 
through all this section that $b_-=b_+=0$.

In Figure~\ref{fig:confidenceregion}, we apply this rule to our data. 
The null hypothesis $(H_0)$ \textquote{$\sigma_-=\sigma_+$}
is rejected for all the stocks except for
PCG, meaning that $\sigma_-\not=\sigma_+$ should be considered for twenty out of twenty-one stocks. 
The normalized occupation time $O_+$ for PCG 
is close to \SI{99}{\%}. This may explain the elongated shape of the associated
confidence region.

\begin{figure}[ht!]
    \begin{center}
	\includegraphics{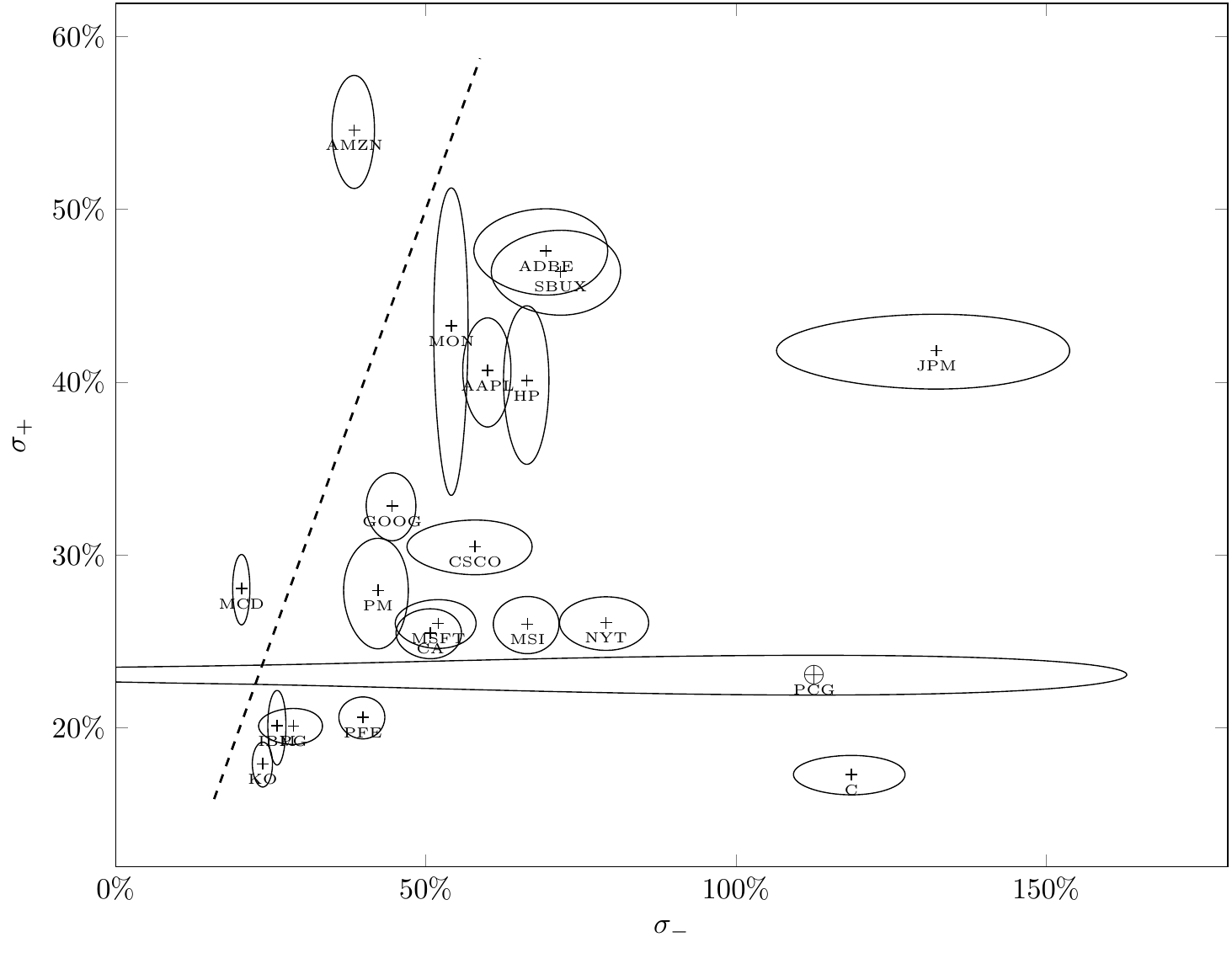}
	\caption{\label{fig:confidenceregion} 
	    Confidence regions for $(\sigma_-,\sigma_+)$:
	Each point is the value of the stock in the $(\sigma_-,\sigma_+)$-plane. 
	Confidence regions at $\SI{95}{\%}$ are the ellipsis in the $(\sigma_-,\sigma_+)$-plane 
	around the points. Points marked by $\oplus$ are the ones
	for which the Hypothesis $\sigma_-=\sigma_+$ is not rejected. 
    Points marked by $+$ are the ones for which this Hypothesis is rejected.}
    \end{center}
\end{figure}

In Figure~\ref{fig:loglik}, we plot the approximated log-likelihood
$\LogLik(i)$ against $r^{(i)}=\log m^{(i)}$ for three stocks. We see
that $\LogLik(i)$ may have one main peak (for CSCO), two main peaks 
(for GOOG) or be ``flat'' as for PCG. 
A steep peak means that it is clear where the threshold level should be taken,
and the procedure is more stable. In these cases, $\sigma_-$ is likely to
differ from $\sigma_+$ and leverage effect occurs.

\begin{figure}[ht!]
    \begin{center}
	\includegraphics{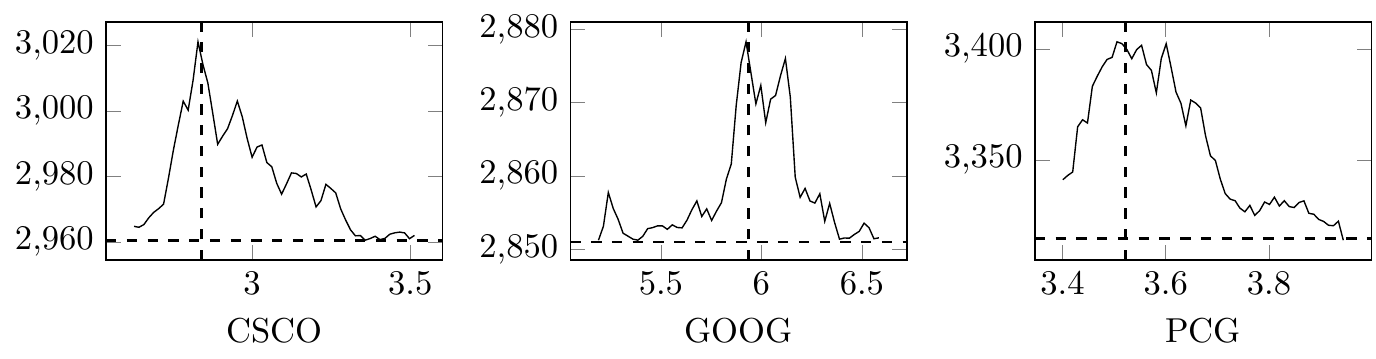}
	\caption{\label{fig:loglik} 
	    The (approximated) log-likelihood $\LogLik(i)$ given by~\eqref{eq:loglik}
	    against the possible threshold $r^{(i)}=\log m^{(i)}$ for the stocks 
	    CSCO, GOOG and PCG. The vertical dashed line represents 
	    the threshold $r^{(i)}$ which maximizes~$\LogLik{(i)}$, hence the estimated
	    $r$. The horizontal dashed line represents the value of the log-likelihood
	    of the drifted Brownian motion (that is $\sigma_-=\sigma_+$ and $b_-=b_+$).
    }
    \end{center}
\end{figure}

\subsection{Comparison with a non-parametric estimator}

Non-parametric estimation assumes nothing on
the underlying volatility and drift coefficients~\parencite{kutoyants2004,iacus}. 
The Nadaraya-Watson estimator provides us with such an estimator~\parencite{iacus}.
We then compare graphically our
estimations with the non-parametric estimation of the coefficients of the log-price. 
For this, we use the \texttt{R} package \texttt{sde}~\parencite{iacus}. 
In Figure~\ref{fig:nonparametric}, we present the results for the three stocks already used in Figure~\ref{fig:loglik}. 
More figures may be found in \citet{LP2}.
Most of the stocks seem to exhibit a behavior similar to the one presented
here, with a sharp variation of both the volatility and the drift. Again, 
this reinforces the idea that regime switching holds for most of the stocks.

\begin{figure}[ht!]
    \begin{center}
	\includegraphics{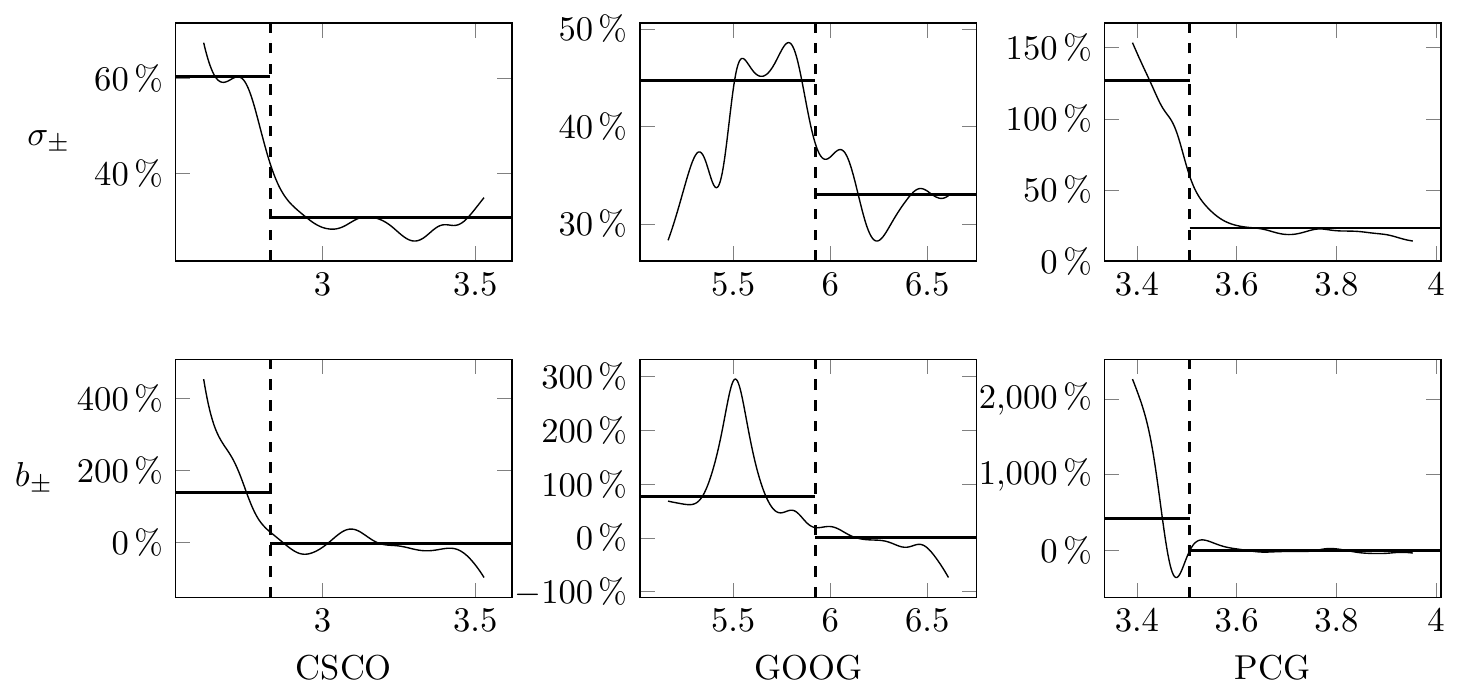}
	\caption{\label{fig:nonparametric} 
	    Non-parametric estimation of $\sigma$ and $b$ for the log-price
	    of the stocks CSCO, GOOG and PCG with a Nadaraya-Watson estimator. 
	    The vertical dashed line represents the 
	    choice of the threshold. The horizontal lines represent
	    the estimated values of $(\sigma_-,\sigma_+)$ (top) and $(b_-,b_+)$ (bottom). 
    }
    \end{center}
\end{figure}


\section{Leverage and mean-reversion effects for the S\&P 500 stocks}\label{sec:SP500}

We apply our estimators to the S\&P 500 stock prices over 
the three periods of five years each:  1/1/2003 -- 31/12/2007, 1/1/2008 -- 31/12/2012 and 
1/1/2013 -- 31/12/2017. The second period contains the 2008 financial crisis. 

Estimators are thus computed on the three periods for $332$ 
of the stocks out of the $500$, failures being due to the lack of complete data 
or to stock prices with at most \SI{5}{\%} of the observations 
on each side of the estimated threshold. The latter exclusion aims at avoiding outliers.

We plot the estimated values of $(\sigma_-,\sigma_+)$ and $(b_-,b_+)$ in Figure~\ref{fig:sp500}. 

\begin{figure}[ht!]
    \begin{center}
	\includegraphics{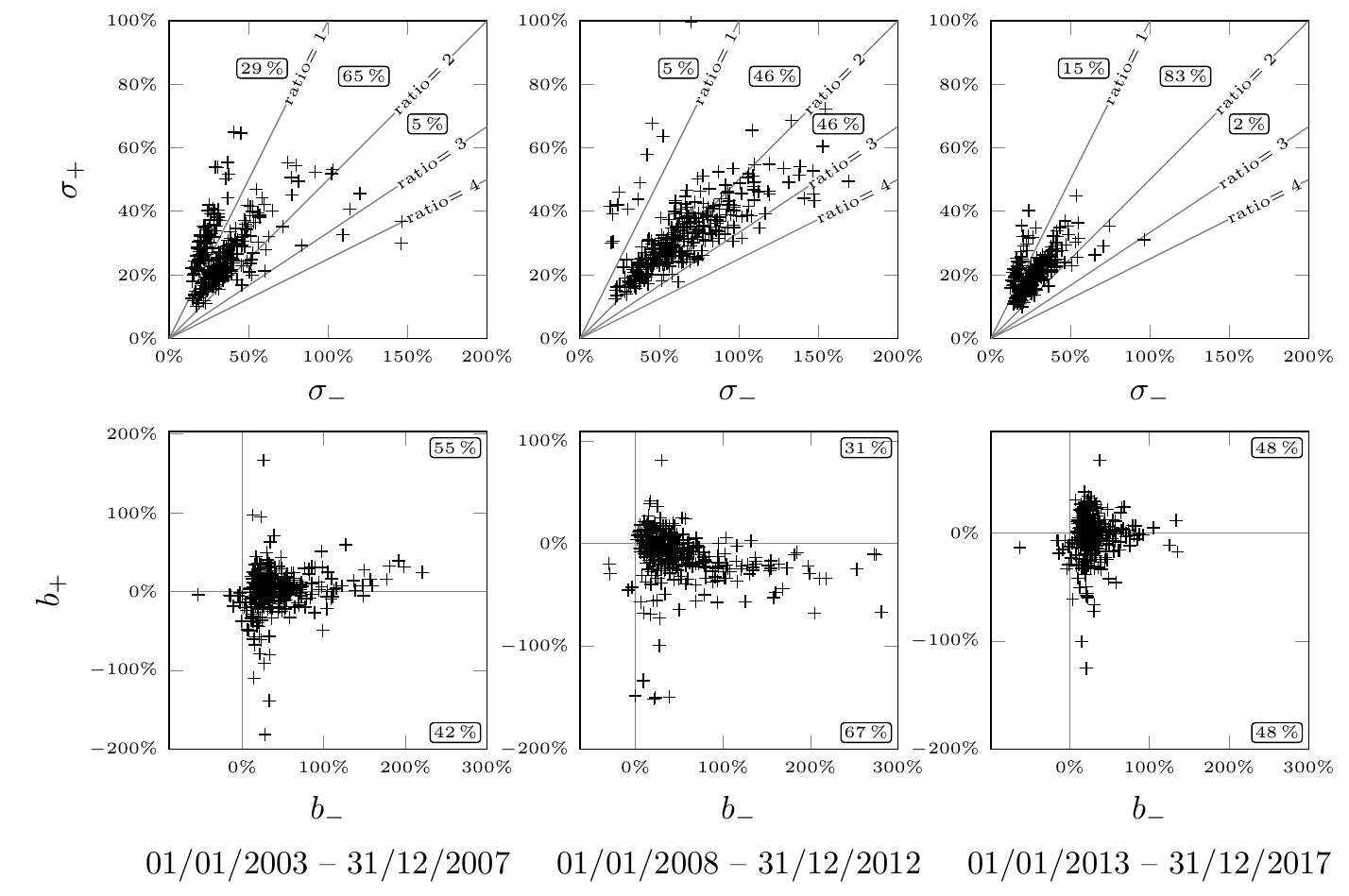}
	\caption{\label{fig:sp500} 
	    Yearly estimated values of $(\sigma_-,\sigma_+)$ and $(b_-,b_+)$ 
	    on the daily close prices 
	    for the stocks in the S\&P 500. The percentages
	    in boxes indicates the proportions of stocks in each region
	    delimited by the solid gray lines.
	}
    \end{center}
\end{figure}

In the 2008-2012 period, containing the 2008 financial crisis, 
the stock prices exhibit higher ratio of $\sigma_-/\sigma_+$, 
hence stronger leverage effects. 
For the 2008-2012 period, the
hypothesis test \textquote{$\sigma_-=\sigma_+$} is rejected for all the stocks.
For the 2003-2007 period (resp. 2013-2017), the test is rejected
for all but five (resp. six) stocks.

These findings seem to be consistent with \citet{ait-sahalia13a}, which shows leverage effect in an aggregated form
through the correlation between the VIX (involving the ex-ante volatility)
and the log-return of the indices of the S\&P 500 for the period
1/1/2004 -- 12/12/22007.

Let us now look at the  drift coefficient. First, we notice that $b_-$ is almost always positive, but $b_+$ can be positive or negative. 
For the 2008-2012 period, which contains the 2008 financial crisis, $224$ stocks show a mean-reverting
behavior ($b_->0,b_+<0$) against $140$ (resp. $161$) for the 2003-2007 period
(resp. the 2013-2017).
A possible interpretation of such results is that, on periods not involving
financial crisis, prices tend to increase in time and therefore also the
estimated value of $b_+$ is positive relatively often. When a crisis occurs,
prices oscillate more, going down as well as up, and thus giving in most cases
estimated values of $b_+ < 0$, $b_- >0$. This seems to be consistent with the
results of Section \ref{sec:empirical} and \citet{mota14a}.
 
Let us also report here a similar finding of \citet{SPIERDIJK2012228}, together with 
one of its economic interpretations given in the same paper: ``Our findings suggest that expected returns diverge away from their long-term value and converge back to this level relatively quickly during periods of high economic uncertainty; much faster than in more tranquil periods.
When the economic uncertainty dissolves, expected returns are likely to show a substantial increase
in value during a relatively short time period, which could account for such high mean-reversion
speed. Measures and interventions by financial and government institutions to restore financial
stability may also speed up the adjustment process.''

In Figure~\ref{fig:sp500:threshold}, we plot a normalized log-threshold
by dividing the log-threshold $r$ by the mean log-price (in order to get a normalized value)
for one period against the next one. We observe that this ratio ranges
between $0.8$ and $1.2$ for each stock and each period. 
The normalized log-threshold is statistically higher for the 2013-2017 period
than for the 2008-2012 period.

\begin{figure}[ht!]
    \begin{center}
	\includegraphics{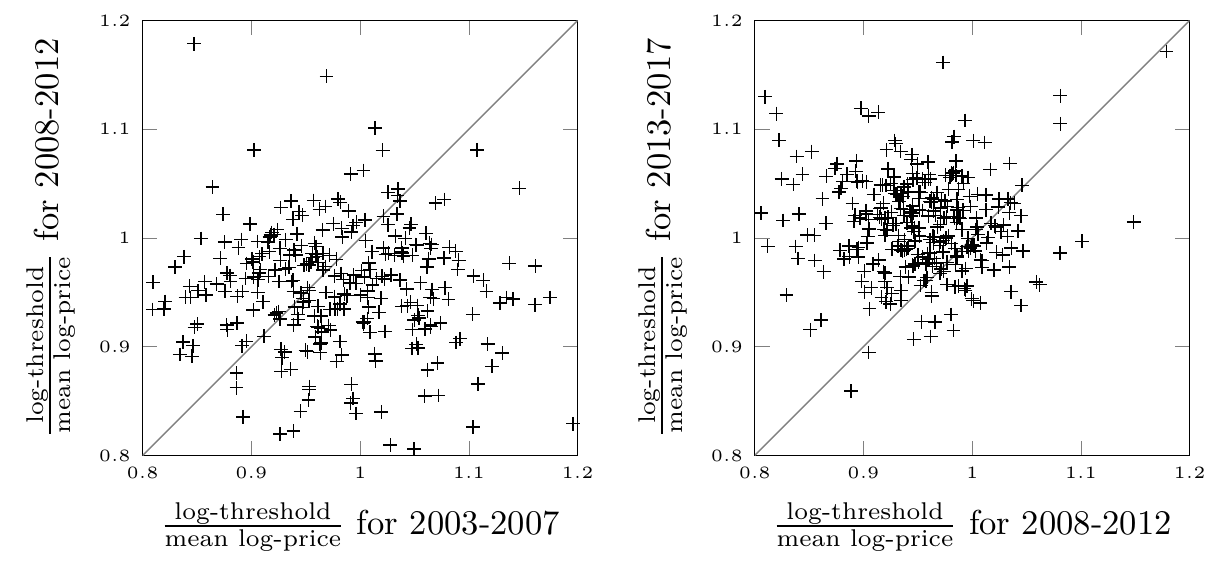}
	\caption{\label{fig:sp500:threshold} 
	    Ratio of the log-threshold $r$ over 
	    the mean log-price for two consecutive periods.
	    The solid line is $y=x$.
	}
    \end{center}
\end{figure}

These facts seem to indicate than in the crisis period, 
a stronger leverage effect is more likely to occur
at a lower threshold.

\section{Conclusion}

\label{sec:conclusion}

Leverage effects in finance have been the subject of a large literature with
many empirical evidences. 
The \emph{geometric oscillating Brownian motion} (GOBM) studied in this article mimics such leverage effect.
This model can be interpreted as a continuous-time
version of the self-exciting threshold autoregressive model (SETAR).

We have shown its validity on real data and exhibited evidence in favor of leverage
effects.  We detect a mean-reverting behavior for most of the stocks in periods of financial crisis, less so in periods not containing major events, in agreement with \citet{SPIERDIJK2012228}.
Our estimations are consistent with the ones of M.~Esquível and
P.~Mota based on least squares.

Our model is simple and does not aim at capturing other stylized facts. 
It could serve as a basic building brick for more complex models. 
Our rationale is that the GOBM is really tractable while offering more flexibility than 
the Black-Scholes model:
\begin{itemize}[nosep,partopsep=0pt,leftmargin=1cm]
    \item The estimation procedure is simple to set up. 
    \item Simulations are easily performed. 
    \item The market is complete.
    \item Option pricing could be performed through analytic or semi-analytic approach 
without relying on Monte Carlo simulations.
\end{itemize}

In addition, our model and estimation procedure could serve other purposes. In this model the leverage effect is a consequence of a spatial segmentation in which the dynamics of the price changes according to a threshold. 
The same estimation procedure could also be applied in short time windows in order to detect sharp changes, hence reflecting temporal changes, as for regime switching models involving Hidden Markov models.

Another possible application of the GOBM, and more generally of local volatilities with discontinuities, would be to introduce such features in more complex models. 
The properties we showed in the present paper and their capability of reproducing extreme skews in the implied volatility \parencite{pigato} suggest that such discontinuities could be a tractable way to introduce asymmetries and regime changes in other models \parencite{interestrate}.

\section*{Acknowledgements}

P. Pigato gratefully acknowledges financial support from ERC via Grant CoG-683164.
The authors are grateful to the anonymous reviewers for their useful suggestions.



\printbibliography

\end{document}